\pdfoutput=1
\documentclass[11pt]{article}

\usepackage[letterpaper,margin=1in]{geometry}
\usepackage{graphicx}
\usepackage{amsmath}
\usepackage{amssymb}
\usepackage{amsthm}
\usepackage{xcolor}
\usepackage{bm}
\usepackage[numbers,sort&compress]{natbib}
\usepackage{hyperref}

\graphicspath{{./}}

\hypersetup{
    colorlinks=true,
    linkcolor=blue,
    citecolor=blue,
    urlcolor=blue
}

\theoremstyle{plain}
\newtheorem{theorem}{Theorem}
\newtheorem{proposition}[theorem]{Proposition}

\theoremstyle{definition}

\theoremstyle{remark}

\definecolor{mathred}{RGB}{255, 0, 0}

\numberwithin{equation}{section}

\begin{document}

\title{Pi-Change: A Prior-Informed Multiple Change Point Detection Algorithm}

\author{
Jonathon Jacobs\\
Department of Biostatistics\\
Virginia Commonwealth University\\
Richmond, VA, USA
\and
Shanshan Chen\thanks{Corresponding author: \href{mailto:schen3@vcu.edu}{schen3@vcu.edu}}\\
Department of Biostatistics\\
Virginia Commonwealth University\\
Richmond, VA, USA
}

\date{}

\maketitle

\abstract{Statistical change point (CP) detection methods typically rely on likelihood-based inference and ignore contextual information about plausible CP locations beyond the observed sequence. Although informative priors provide a natural way to incorporate such information, general and computationally efficient methods for doing so are lacking, especially for multiple CP detection. To address this gap, we propose a prior-informed CP detection algorithm (Pi-Change) that incorporates prior information on CP locations through a time-varying penalty term. We prove that the proposed penalty can be embedded in the Pruned Exact Linear Time framework while preserving the dynamic programming recursion and pruning rule required for efficient multiple CP detection. Across simulation studies and three time-series applications, Pi-Change discourages spurious CPs unsupported by prior information, remains robust to prior misspecification, and improves detection accuracy. More broadly, Pi-Change extends multiple CP detection beyond purely data-driven fitting by incorporating partial prior knowledge in a computationally efficient and interpretable way. It is particularly useful when CPs arise from heterogeneous mechanisms or are associated with known external events, helping quantify the delay between an event and the resulting structural change.}

\noindent\textbf{Keywords:} Prior information; dynamic programming; penalized likelihood; structural changes

\section{Introduction}
Long time series are increasingly prevalent in science as a result of automated collection techniques \citep{jensen2017time}. These time series likely have multiple change points (CPs) when the data generation mechanism changes. Identifying these CPs and interpreting their meaning is a central goal of time-series analysis, creating a need for multiple CP detection (MCD) methods that are both accurate and scalable.

Existing MCD methods broadly fall into two classes. One class resorts to the "divide and conquer" approach, by first partition the long sequences to shorter windows and locating a single CP within each window. This strategy is attractive because it simplifies estimation by reducing the MCD problem to a squence of single-CP detection tasks, and it underlies methods such as binary segmentation and application-specific windowing procedures. Examples of this group include CircaCP \citep{chen2024validating}, an wearable actigraphy segmentation method which produces windows based on the circadian rhythm, and Binary segmentation, a more general algorithm which produces windows that assume the existence of a CP iteratively \citep{sen1975tests}. A second class treats the number and locations of CPs as a global optimisation problem under a segmentwise cost function. Methods in this class, including segment neighborhood \citep{braun1998statistical}, optimal partitioning \citep{jackson2005algorithm}, and pruned exact linear time (PELT) \citep{killick2012optimal}, do not require the number of CPs to be fixed in advance and are often preferred when scalable global estimation is desired.

Despite these advances, current CP detection methods rely almost exclusively on the observed sequence itself. In many applications, however, historical or contextual information about where CPs are likely to occur is not only available, but also scientifically important, because it reflects distinct underlying data-generating mechanisms and helps distinguish scientifically meaningful changes from incidental ones. In actigraphy , for example, circadian rhythm provides such external information about likely sleep- and wake-onset times \citep{chen2024validating}. In historical, economic, or environmental series, known external events may suggest periods in which structural change is more plausible. Methods that treat all candidate CPs as exchangeable may therefore identify changes that are statistically admissible but scientifically uninformative.

Current methods have limited ability to incorporate such information. Bayesian approaches provide a natural mechanism for encoding prior information on CP locations, and a few parametric Bayesian methods \citep{fearnhead2006exact, adams2007bayesian}, or MCMC-based methods have been proposed in simpler settings \citep{stephens1994bayesian, dehning2020inferring}. However, these methods are often computationally burdensome and do not provide a general, scalable solution for MCD. Window-based methods, such as CircaCP, can encode prior information indirectly through the choice of windows, but this representation is rigid and does not quantify uncertainty in the prior information. More importantly, neither approach can be readily integrated into a global optimization framework that retains both flexibility and computational efficiency. Consequently, existing methods do not provide a principled way to let external information guide CP estimation while controlling its influence on the final segmentation.

We thus propose Pi-Change, a prior-informed MCD algorithm that incorporates prior information on likely CP locations through a time-varying penalty. The penalty is constructed so that its minima reflect locations favored by prior information and can be specified to approximate prior distributions on multiple CP locations, thereby admitting a Bayesian interpretation. We further show that the resulting criterion can be embedded within the PELT framework while preserving the recursion and pruning rule required for efficient estimation. 

The rest of the paper is organized as follows. Section 2 details the Pi-Change algorithm, by introducing the construction of the time-varying penalty and its valid integration with PELT. Section 3 presents simulation studies. Section 4 applies Pi-Change to three real-data examples: combat deaths from 1816 to 2007, oil-price volatility from 2000 to 2009, and workplace mobility after the onset of the COVID-19 pandemic. Section 5 concludes with discussion, limitations, and future directions.

\section{Methods}
In this section, we motivate our approach from a Bayesian perspective, construct time-varying penalty functions inspired by Bayesian priors, and then show that these prior-like penalties can be validly incorporated into PELT while preserving its computational efficiency.

\textbf{Problem Setup:} For a time series $ Y = [y_1, y_2,\ldots,y_N]$, let $C(y_{s:t})$ denote the cost of fitting a segment from index $s$ to $t$, and let $\boldsymbol{\tau}=\{\tau_0=0<\tau_1<\cdots<\tau_m<\tau_{m+1}=N\}$ denote a set of CPs. A MCD problem seeks to identify the set of CPs $\boldsymbol{\tau}$ that globally minimize the sum of cost functions:
\begin{equation}
\label{equ: MCD min}
    \arg\min_{\boldsymbol{\tau}} \sum_{i = 1}^{m + 1} \Big[C(y_{((\tau_{i - 1}) + 1): \tau_i}) \Big],
\end{equation}

\subsection{Bayesian analysis for MCD problems}
\label{sec: CP priors}
Bayesian analysis provides a natural way to encode prior information on CP locations through a prior $\pi(\boldsymbol{\tau})$. In the multiple CP setting, this prior can be decomposed into a prior on the number of CPs, $\pi_1(|\boldsymbol{\tau}|)$, and a prior on their locations conditional on that number, $\pi_2(\boldsymbol{\tau}\mid |\boldsymbol{\tau}|)$, i.e., $ \pi(\boldsymbol{\tau}) = \pi_1(|\boldsymbol{\tau}|)\pi_2(\boldsymbol{\tau}\mid |\boldsymbol{\tau}|)$. Assuming that the prior on CP locations is specified independently of the remaining model parameters, the posterior distribution of $\boldsymbol{\tau}$ satisfies
\begin{equation}
    P(\boldsymbol{\tau}\mid Y) \propto P(Y\mid \boldsymbol{\tau})\,\pi_2(\boldsymbol{\tau}\mid |\boldsymbol{\tau}|)\,\pi_1(|\boldsymbol{\tau}|),
\end{equation}
where $P(Y\mid \boldsymbol{\tau})$ is the marginal likelihood under CP configuration $\boldsymbol{\tau}$.

To illustrate this posterior, we consider the single CP problem, where the task is to identify a single true CP $k$. Let $f(Y|\boldsymbol{\Theta})$ be a given data distribution, where $\boldsymbol{\Theta}$ is the parameter vector of the distribution. Suppose that $\pi(k) = \pi_2(\boldsymbol{\tau}\big||\boldsymbol{\tau}| = 1)$ is a proper prior on the CP location, and $\pi(\boldsymbol{\Theta}|\boldsymbol{\alpha})$ is a prior on the parameters $\boldsymbol{\Theta}$ given hyperparameters $\boldsymbol{\alpha}$. Then the posterior distribution of $\boldsymbol{\tau}$ is 

\begin{equation}
    p(k|Y, \boldsymbol{\Theta}, \boldsymbol{\alpha})
    = \frac{p(k, Y, \boldsymbol{\Theta}, \boldsymbol{\alpha})}{p(Y, \boldsymbol{\Theta}, \boldsymbol{\alpha})} 
    = \frac{f(Y|\boldsymbol{\Theta}, \boldsymbol{\alpha})\pi(\boldsymbol{\Theta}|\boldsymbol{\alpha})\pi(\boldsymbol{\alpha})\pi(k)}{p(Y|\boldsymbol{\Theta}, \boldsymbol{\alpha})\pi(\boldsymbol{\Theta}|\boldsymbol{\alpha}) \pi(\boldsymbol{\alpha})} 
    = \frac{f(Y|\boldsymbol{\Theta}, \boldsymbol{\alpha})\pi(k)}{p(Y|\boldsymbol{\Theta}, \boldsymbol{\alpha})},
\end{equation}

where $p(Y|\boldsymbol{\Theta}, \boldsymbol{\alpha})$ is the marginal likelihood of the observed data. Since the marginal likelihood does not depend on $k$, we see that in terms of $k$, the posterior has the desired form. A similar calculation holds for multiple CP detection.

This formulation is appealing because it provides a natural way to quantify uncertainty in CP locations. Its main drawback is computational: in multiple CP problems, the marginal likelihood is often unavailable in closed form under non-conjugate models, so full posterior inference typically relies on MCMC or related simulation-based methods \citep{stephens1994bayesian, gelman1995bayesian}, which do not scale well. More restrictive choices, such as conjugate priors or simple location priors, can improve tractability, but at the cost of flexibility and generality.
For CP estimation alone, however, this difficulty can be sidestepped by appealing to maximum a posteriori (MAP) estimation, which is equivalent to minimizing a negative log-likelihood plus a negative log-prior term:
\[
\hat{\boldsymbol{\tau}}_{\mathrm{MAP}}
= \arg\min_{\boldsymbol{\tau}}
\left\{
-\log P(Y \mid \boldsymbol{\tau}) - \log \pi_1(|\boldsymbol{\tau}|)
-\log \pi_2(\boldsymbol{\tau} \mid |\boldsymbol{\tau}|)
\right\}.
\]

Thus, MAP estimation can be viewed as penalized likelihood, with one component governing the number of CPs and another governing their locations. This connection motivates our approach: rather than performing full Bayesian inference, we construct a prior-informed penalty that remains compatible with efficient multiple CP detection.

\subsection{Constructing a prior-informed penalty for MCD problems}
\label{sec: prior construction}
We now construct a prior-informed penalty motivated by penalized likelihood formulation of the MCD problem, which becomes minimizing: 
 \[
\sum_{j=1}^{m+1} C\!\left(y_{(\tau_{j-1}+1):\tau_j}\right) + \text{penalty}
 \]

\begin{equation}
\label{equ: penalized MCD}
    \arg\min_{\boldsymbol{\tau}} \sum_{i = 1}^{m + 1} \Big[C(y_{((\tau_{i - 1}) + 1): \tau_i}) \Big] + \text{penalty},
\end{equation}

When the cost $C$ is the negative log-likelihood (NLL), this criterion is equivalent, up to an additive term, to maximizing a posterior objective in which the penalty plays the role of a negative log-prior on CP locations. This motivates specifying the penalty directly from prior information on plausible CP locations. Once defined, the resulting penalty function is deterministic for a given time series.

To encode prior support around a prespecified center $c_i$, we define a Gaussian kernel:
\begin{equation}
\kappa_i(t) = \exp\!\left\{-\frac{(t- c_i)^2}{2\sigma^2}\right\}, \qquad t=1,\ldots,n,
\end{equation}

where $\sigma$ controls the temporal spread of prior influence. Each kernel is largest near $c_i$ and decays smoothly with distance. For multiple prior-supported center locations $c_1,c_2,\dots,c_m$, we combine these local supports through the concatenation rule
\[
1-\prod_{i=1}^m \bigl(1-\kappa_i(t)\bigr)
\]
This construction is motivated by the noisy-OR idea from probabilistic graphical models \citep{koller2009probabilistic}, where multiple possible causes are combined under a disjunctive interpretation: the overall signal is strong when at least one cause is active. In our setting, each kernel center represents one possible source of prior support for a CP, so the above quantity is large whenever $t$ lies close to at least one supported location. Since our goal is to construct a penalty rather than a support score, we use its complement $S(t)$ instead:

\begin{equation}
S(t) = \prod_{i=1}^m \bigl(1-\kappa_i(t)\bigr).
\end{equation}

Thus, $S(t)\in[0,1]$, with values near 0 at locations favored by the prior and values near 1 far from all supported centers. Furthermore, it accommodates any number and configuration of prior centers without imposing additional spacing constraints. Lastly, this function can be linearly rescaled to define a time-varying penalty taking values in $[\beta, \lambda + \beta]$: 
\begin{equation}
\label{eq: gt}
g(t) = \lambda S(t) + \beta
\end{equation}

\subsection{Exactness and pruning under a time-varying penalty in Pi-Change}
\label{sec: exactness-pruning}
The penalty construction above yields a prior-informed segmentation criterion. We show in Appendix that, when the penalty is attached to candidate CP locations, the resulting objective remains exactly solvable by dynamic programming and preserves the pruning structure of PELT. Thus, Pi-Change retains the computational appeal of PELT while allowing the penalty to vary over time according to prior support.

The key reason is that $g(t) = \lambda S(t) + \beta$ depends only on the candidate CP location $t$; it does not depend on future observations, segment parameters, or interactions among CPs. As a result, the location-specific penalty enters the recursion as an additive offset and does not alter the logic of the pruning argument. Substituting \eqref{eq: gt} into the pruning condition \eqref{eq: prune-condition} yields 
\[
F(t)+C\bigl(y_{(t+1):s}\bigr)+K+\lambda\{S(t)-S(s)\}\ge F(s),
\]

This means compared with the standard PELT, only the penalty difference affects comparisons between competing candidate CPs. Thus, locations receiving weaker prior support are more likely to be pruned, while prior-favored locations remain competitive. Note that Pi-Change does not lower the global penalty level relative to standard PELT. Furthermore, the objective function \eqref{eq: pi-change-obj} can be seen to retain the usual global penalty on the number of CPs through the term $m\beta$, while adding a location-specific penalty determined by prior support. Pi-Change thus combines a global sparsity penalty on CP count with a location-specific penalty that discourages CPs in regions unsupported by prior information.

$\lambda$ also controls the extent to which prior information tilts comparisons between candidate CPs: larger $\lambda$ make pruning more likely for locations with larger $S(t)$, that is, locations less supported by prior information. In practice, we recommend choosing $\lambda$ so that the maximum uplift due to $\lambda$ is of the same order as the baseline penalty $\beta$. This allows prior information to influence CP selection without overwhelming the contribution of the segment cost. Under this formulation, Pi-Change does not reduce the global level of penalization relative to PELT. Instead, it selectively increases the penalty away from prior-favored locations, thereby discouraging CPs unsupported by external information.

\section{Simulation Studies}
In this section, we evaluate Pi-Change in a controlled setting designed to mimic week-long actigraphy data. These simulations assess whether Pi-Change improves estimation of sleep- and wake-onset times (SWOTs) by discouraging false positive CP estimates, and whether the method remains robust when prior centers are placed far from the true CPs. 

\subsection{Simulation Setup}
Actigraphy often exhibits excess zeros during inactive periods, especially sleep, together with positive activity values that are continuous and right-skewed \citep{chen2024validating}. We therefore simulate actigraphy sequences using a zero-augmented Gamma (ZAG) distribution, which captures these features by assigning positive probability to zero while modeling nonzero observations with a Gamma distribution. Accordingly, each observation in the simulated time series is generated as follows:

\begin{align}
\label{equ: ZAG pdf}
     f_{ZAG}(y_i|\boldsymbol{\eta}_i) = p_i I(y_i = 0) + (1 - p_i) I(y_i > 0) \frac{1}{\Gamma(\xi) \theta_i^\xi} y_i^{\xi - 1}\exp\left(-\frac{y_i}{\theta_i}\right), \\
     \boldsymbol{\eta}_i = (\xi, \theta_i, p_i),\quad y_i \ge 0, \quad \xi, \theta_i > 0, \quad 0 \le p_i < 1.\notag 
\end{align}

Here, we assume the shape parameter $\xi$ is known and constant throughout the time series, but the scale parameter $\theta_i$ and probability of observing a zero $p_i$ are unknown and  vary between segments. 

We simulated 3000 time series, each of which is composed of $7$ sleep-wake cycles. Each sleep-wake cycle was generated by three types of segments $S_1$, $S_2$, $S_3$, by sampling from the following simulation settings: 
\begin{itemize}

\item Segment lengths $N_1$, $N_2$, and $N_3$, where
\[
N_j = 240 + 10K_j,\qquad K_j \overset{\text{i.i.d.}}{\sim} \mathrm{Uniform}\{0,1,\ldots,24\}, \quad j=1,2,3.
\] so each segment lasted between 4 and 8 hours.

\item $S_1$ and $S_2$ represent diurnal activity and $S_3$ represents nocturnal activity, with zero-probabilities sampled as  \[
\begin{aligned}
    p_1, p_2 &\overset{\text{i.i.d.}}{\sim} \mathrm{Uniform}\{0.05,0.06,\ldots,0.45\}, \\
    p_3 &\overset{\text{i.i.d.}}{\sim} \mathrm{Uniform}\{0.55,0.56,\ldots,0.95\},
\end{aligned}
\]

\item All segments in a time series share a unified shape sampled as \[
    \xi \overset{\text{i.i.d.}}{\sim} \mathrm{Uniform}\{0.50,0.51,\ldots,1.30\}.
\]
\end{itemize}

To assess performance across varying levels of signal strength, we consider three contrast scenarios that differ in the separation between daytime and nighttime scale parameters:

\begin{itemize}
    \item \textbf{Low contrast:} $\theta_D \in \{110, 111, \ldots, 170\}$, $\theta_N \in \{85, 86, \ldots, 125\}$
    \item \textbf{Moderate contrast:} $\theta_D \in \{190, 191, \ldots, 280\}$, $\theta_N \in \{55, 56, \ldots, 80\}$
    \item \textbf{High contrast:} $\theta_D \in \{320, 321, \ldots, 450\}$, $\theta_N \in \{35, 36, \ldots, 50\}$
\end{itemize}

By generating three types of segments at random with replacement, this construction generated sequences consisting of 7 concatenated sleep-wake cycles. Within a cycle, the transition from $S_1$ to $S_2$ represents a within-day CP irrelevant to SWOTs, the transition from $S_2$ to $S_3$ represents SOT, and the transition from $S_3$ to the next cycle represents WOT. Because the series ends at the end of the 7th nocturnal segment, the final wake-onset transition was not observed. Consequently, each sequence contained 20 CPs in total: 7 within-day CPs that should ideally be pruned, 7 sleep-onset CPs, and 6 wake-onset CPs, yielding 13,000 true CPs (SWOTs) and 7,000 irrelevant CPs across 1,000 simulations per contrast level. 

We classify CP estimates into four types, using a 90-minute matching window throughout. A true target CP is considered detected if at least one estimated CP falls within 90 minutes of it; otherwise it is a false negative. Among detections that do fall within a target window, any beyond the first are counted as clustered false positives. Unmatched detections falling within 90 minutes of a within-day CP are considered irrelevant false positives, and unmatched detections falling near neither target nor within-day CPs are noted as stray false positives.

We next consider two versions of Pi-Change algorithm to assess its robustness to prior mis-specification: one with penalty centers close to the true CPs, allowing random errors of less than 10 minutes, and one with centers shifted 60 -- 120 minutes away from the truth. Throughout the simulation studies, we compare these two versions of Pi-Change with standard PELT. The detection accuracy is evaluated by 1) the total number false positives and false negatives, and 2) the distance between detected true CPs and their estimated CPs, using the mean distance when clustered false positives are present.  

\begin{figure}[htbp]
	\includegraphics[width= 0.8\linewidth]{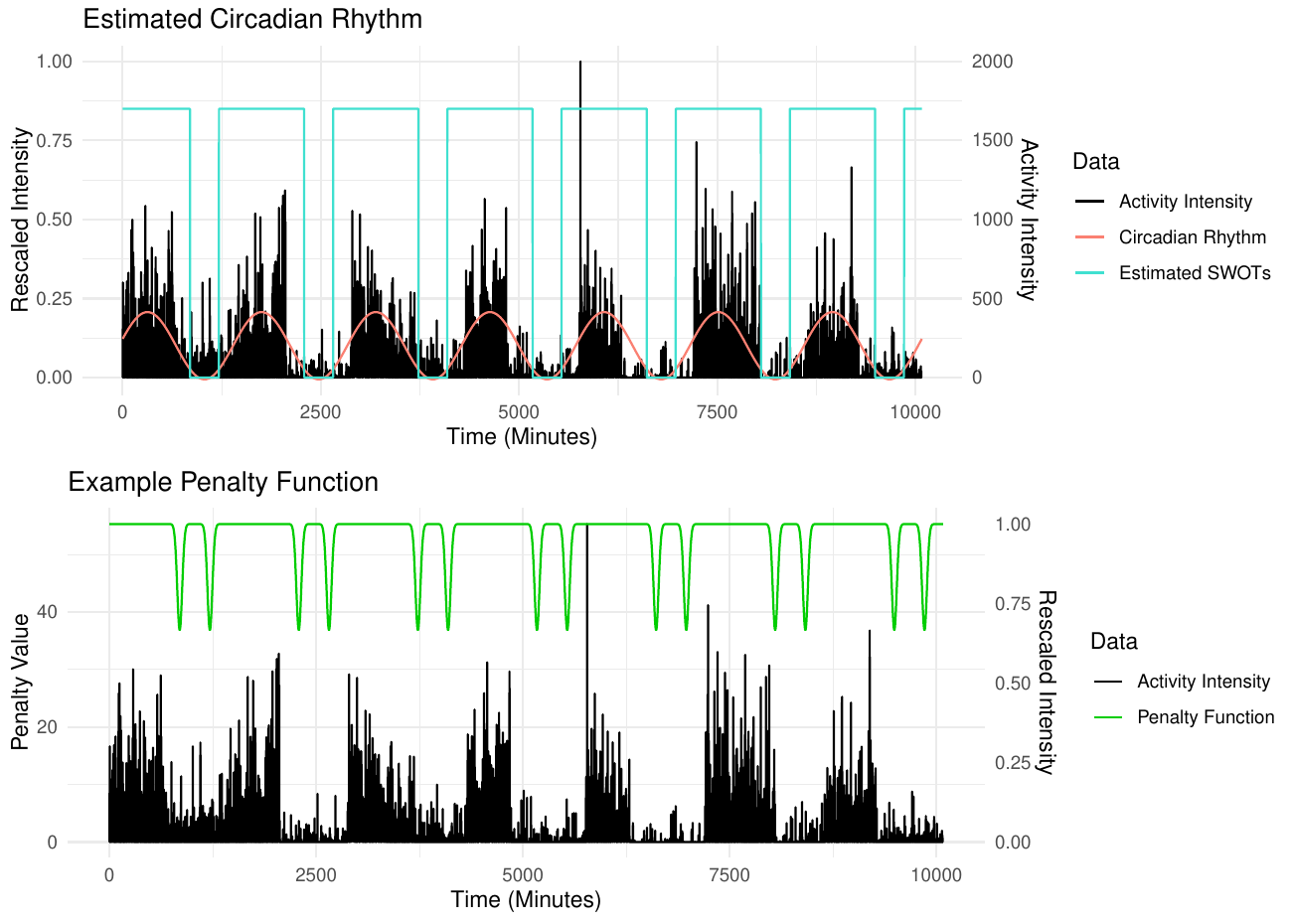}
	\caption{Plots of rescaled actigraphy data, the estimation of the circadian rhythm and the associated penalty function.}
    \label{fig: pen viz}
\end{figure}

Figure \ref{fig: pen viz} illustrates the proposed penalty construction for this simulation study. Here, the circadian rhythm is represented by the best-fitting cosinor curve, and the penalty centers are determined by the dichotomized cosinor curve \citep{chen2024validating}. The procedure described in Section \ref{sec: prior construction} then yields a time-varying penalty with a prior spread of 30 minutes. To compare PELT and Pi-Change as fairly as possible, the minimum value that Pi-Change's penalty is kept at $(|\boldsymbol{\Theta}| + 2) * \log(N)$ at all times, equivalent to the constant Modified BIC penalty in PELT \citep{zhang2007modified}. In addition, PELT uses our ZAG-based cost function, so that only the penalty methods contribute to observed simulation differences. For consistency, the prior spread $\sigma = 30$ minutes for every time series penalty.

\subsection{Simulation Results}

\begin{figure}[htbp]
	\includegraphics[width=0.7\linewidth]{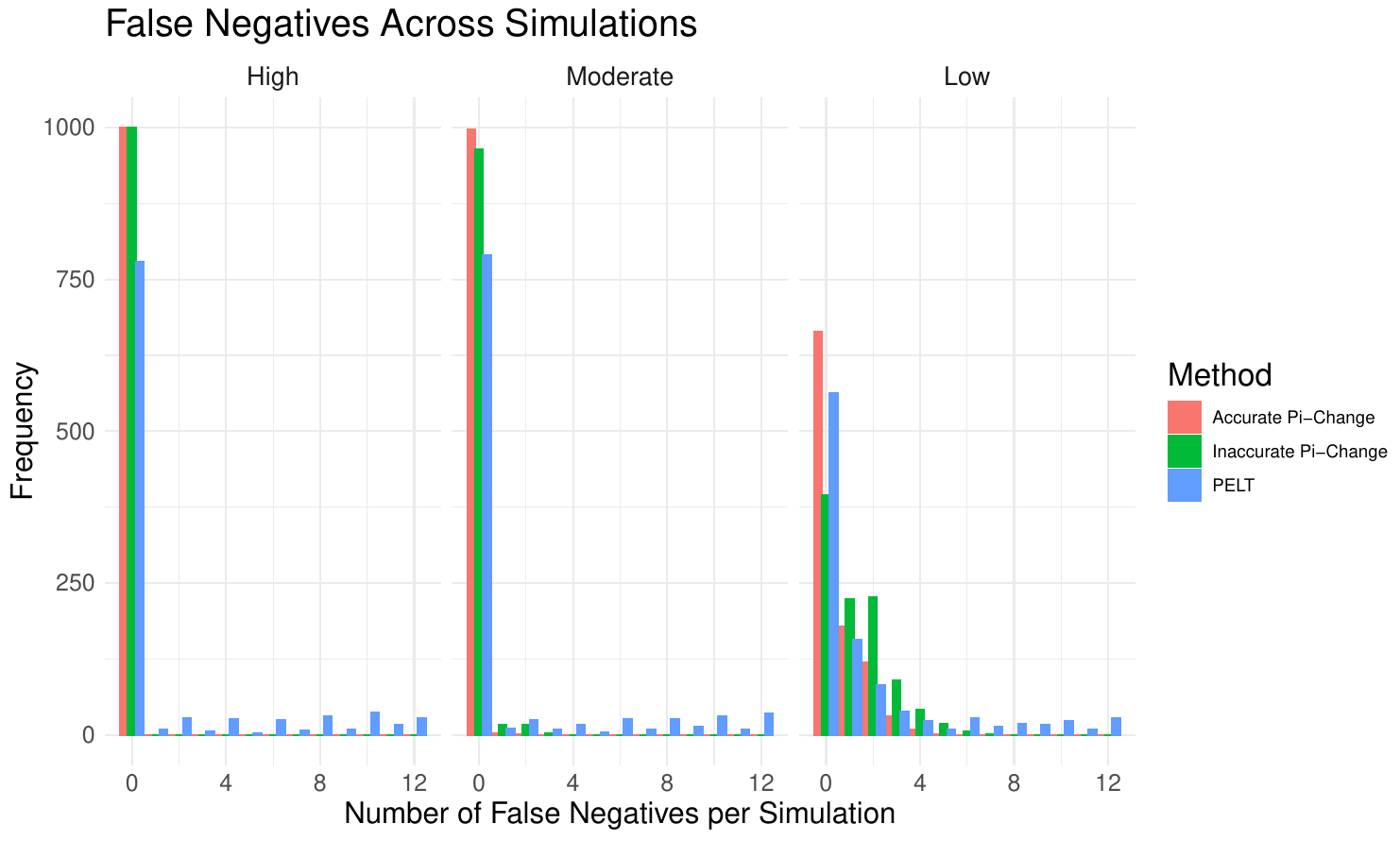}
	\caption{A histogram of the counts of false negatives estimated by PELT and Pi-Change. A false negative is a SWOT that has no CP estimate within 90 minutes of its location. Each chart corresponds to a contrast level specifying the average difference in segment scale parameters.}
	\label{fig: false neg}
\end{figure}

Figure \ref{fig: false neg} displays false negative counts across methods and contrast levels. Under high and moderate contrast, both Pi-Change methods identify nearly all desired CPs, with distributions tightly concentrated at zero. In contrast, PELT failed to identify desired CPs in about 25\% of the time series. Under low contrast, accurate Pi-Change continues to outperform PELT in false negative rate, while inaccurate Pi-Change is more visibly harmed by prior mis-specification. Nevertheless, even under low contrast with displaced prior centers, Pi-Change avoids the most severe failure mode observed in PELT, where a substantial fraction of true CPs are missed entirely.

\begin{figure}[htbp]
	\includegraphics[width=0.7\linewidth]{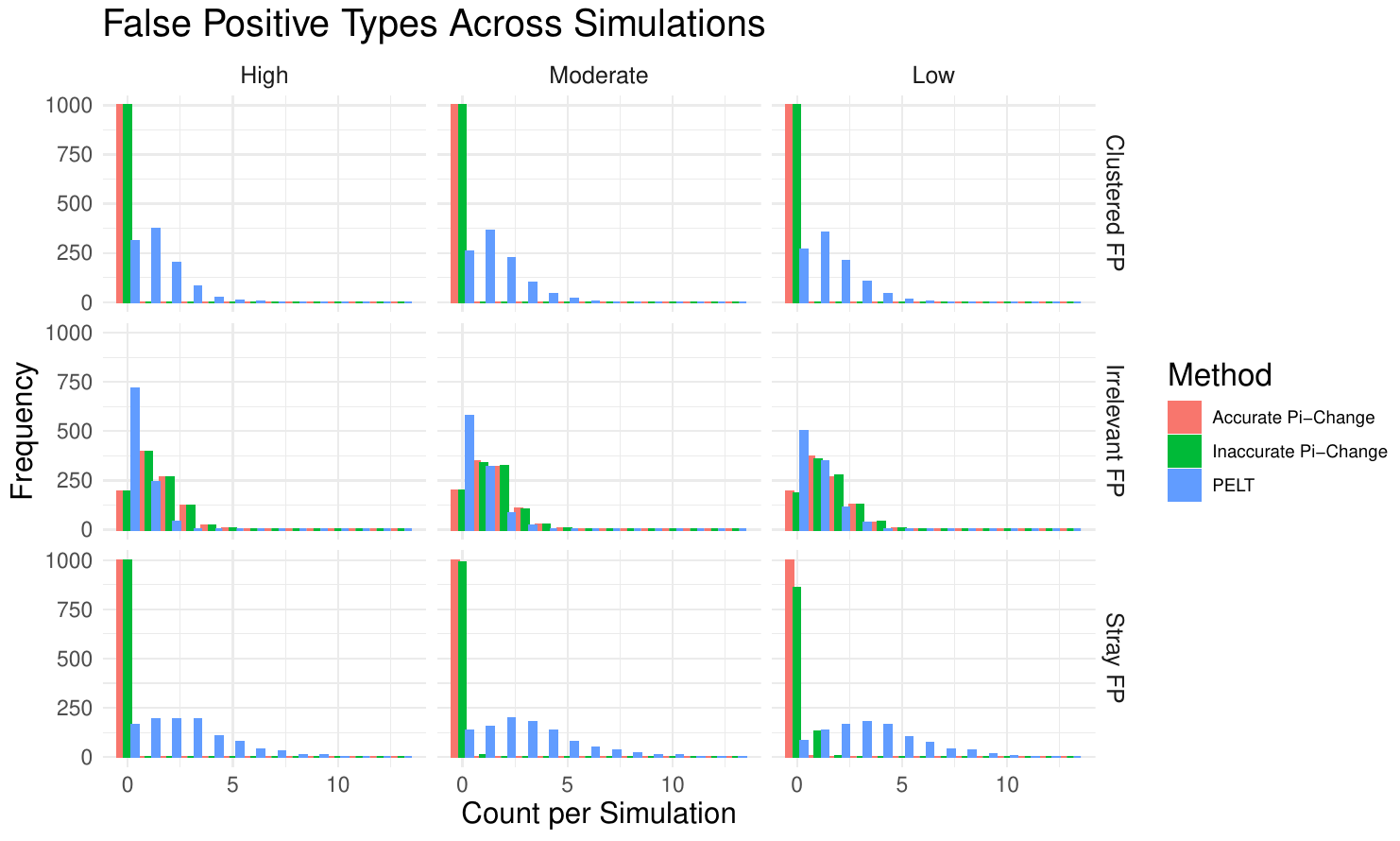}
	\caption{A panel of histograms of the counts of types of false positives estimated by PELT and Pi-Change, organized by false positive type and contrast level. Clustered CPs are additional change points found within a target window of 90 minutes. Irrelevant CPs are those found within the target window of within-day CPs rather than the SWOTs. Stray CPs are those estimates which are not in any target window.}
	\label{fig: false pos}
\end{figure}

Figure \ref{fig: false pos} displays false positive counts by type. Clustered and stray false positives are virtually absent for both Pi-Change variants across all contrast levels, whereas on average PELT produces 1 to 3 per simulation (Table \ref{tab: fp fn avgs}). In contrast, PELT detects fewer irrelevant CPs compared to Pi-Change. However, the number of additional irrelevant false positives Pi-Change estimates is overshadowed by the number of clustered and stray false positives that PELT estimates, and this holds across different contrast levels.

\begin{table}[htbp]
\centering
\resizebox{\textwidth}{!}{%
\begin{tabular}{||r c c c c c||}
\hline
\textbf{Method} & \textbf{Contrast} & \textbf{False Negative} & \textbf{Clustered FP} & \textbf{Irrelevant FP} & \textbf{Stray FP} \\ [0.5ex]
\hline
Pi-Change (Accurate)   & Low      & 0.546 & 0.000 & 1.480 & 0.003 \\
Pi-Change (Inaccurate) & Low      & 1.240 & 0.000 & 1.510 & 0.146 \\
PELT                   & Low      & 1.740 & 1.360 & 0.697 & 3.390 \\
\hline
Pi-Change (Accurate)   & Moderate & 0.004 & 0.000 & 1.430 & 0.001 \\
Pi-Change (Inaccurate) & Moderate & 0.059 & 0.000 & 1.440 & 0.012 \\
PELT                   & Moderate & 1.530 & 1.370 & 0.541 & 2.890 \\
\hline
Pi-Change (Accurate)   & High     & 0.000 & 0.000 & 1.400 & 0.000 \\
Pi-Change (Inaccurate) & High     & 0.000 & 0.000 & 1.390 & 0.001 \\
PELT                   & High     & 1.590 & 1.170 & 0.326 & 2.530 \\
[0.5ex]
\hline
\end{tabular}}
\caption{A table presenting the average number of false negative and false positive CP detections per simulated time series, organized by type. Clustered CPs are additional CPs found within a target window of 90 minutes. Irrelevant CPs are those found within the target window of within-day CPs rather than the SWOTs. Stray CPs are those estimates which are not in any target window.}
\label{tab: fp fn avgs}
\end{table}

Table \ref{tab: false pos error quants} and Figure \ref{fig: errors zoom} examine estimation errors for well-detected CPs, measured as the distance between each true target CP and its closest estimate within the matching window. Both versions of Pi-Change achieve substantially smaller mean absolute errors than PELT across all contrast levels. Under low contrast, accurate and inaccurate Pi-Change yield MAEs of 2.3 and 5.0 minutes respectively, compared to 13.2 minutes for PELT. Under moderate contrast, MAEs fall to 1.6 and 2.2 minutes for Pi-Change against 7.2 minutes for PELT. Under high contrast all methods improve considerably, though Pi-Change retains its advantage with MAEs near 1.0 and 1.1 minutes versus 5.3 minutes for PELT. This contrast is most pronounced in the upper tail: under low contrast, PELT errors reach above 30 minutes at the 90th percentile, while both Pi-Change variants remain below 10 minutes. As contrast increases, all three methods concentrate near zero, but Pi-Change consistently maintains a sharper peak and lighter tail than PELT. Together, these results indicate that the Pi-Change penalty not only reduces false detections, but also offers much tighter control over large CP estimation errors, with the advantage most pronounced when the contrast between diurnal and nocturnal activity is low, or in other words, when the effect size of parameter differences is low.

\begin{table}[htbp]
\centering
\resizebox{\textwidth}{!}{%
\begin{tabular}{||r c c c c c c c c c||}
\hline
\textbf{Method} & \textbf{Contrast} & \textbf{MAE} & \textbf{Min} & \textbf{25\%} & \textbf{50\%} & \textbf{75\%} & \textbf{90\%} & \textbf{95\%} & \textbf{Max} \\ [0.5ex]
\hline
Pi-Change (Accurate)   & Low      &  2.250 & 0 & 0 & 1 &  3 &  6.0 &  9.00 & 72 \\
Pi-Change (Inaccurate) & Low      &  5.043 & 0 & 0 & 1 &  3 & 10.0 & 27.00 & 90 \\
PELT                   & Low      & 13.241 & 0 & 1 & 5 & 21 & 39.5 & 47.00 & 90 \\
\hline
Pi-Change (Accurate)   & Moderate &  1.584 & 0 & 0 & 1 &  2 &  4.0 &  7.00 & 74 \\
Pi-Change (Inaccurate) & Moderate &  2.219 & 0 & 0 & 1 &  2 &  5.0 &  8.00 & 90 \\
PELT                   & Moderate &  7.213 & 0 & 0 & 1 &  6 & 32.0 & 39.00 & 90 \\
\hline
Pi-Change (Accurate)   & High     &  0.999 & 0 & 0 & 0 &  1 &  3.0 &  4.00 & 17 \\
Pi-Change (Inaccurate) & High     &  1.063 & 0 & 0 & 0 &  1 &  3.0 &  5.00 & 57 \\
PELT                   & High     &  5.338 & 0 & 0 & 1 &  3 & 30.0 & 37.00 & 90 \\
[0.5ex]
\hline
\end{tabular}}
\caption{Error comparison between PELT and Pi-Change (with accurate and inaccurate prior centers). Errors are measured as the distance between a desired SWOT CP and its closest CP estimates. By definition of detection window, the maximum error is 90 minutes. MAE: mean absolute error. Min: minimum. Max: Maximum. 25\%, 50\%, 75\%, 90\%, 95\% are quantiles of the error distributions.}
\label{tab: false pos error quants}
\end{table}

\begin{figure}[htbp]
	\includegraphics[width=0.7\linewidth]{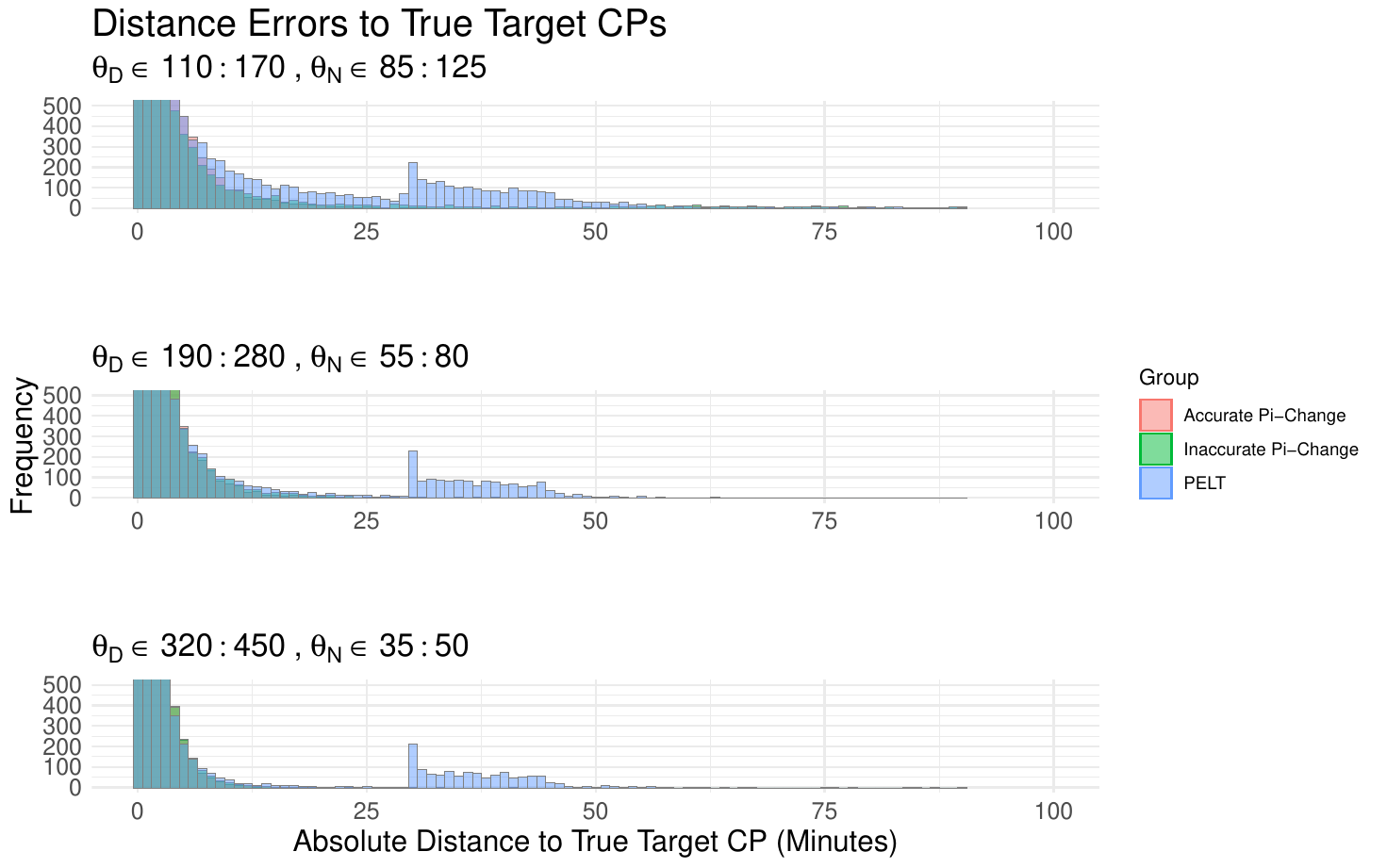}
	\caption{A zoomed-in histogram of the CP estimation errors for PELT and Pi-Change. Errors are defined as the absolute difference between each true CP and the mean of the clustered CP estimates within a 90-minute window of the truth. Centered on frequencies less than 500 to observe details in the histogram tails. Contrast levels from top to bottom: Low, Moderate, High.}
	\label{fig: errors zoom}
    
\end{figure}

\section{Applications}

In this section, we apply the Pi-Change algorithm to several time series, to examine its empirical performance.
\subsection{World War–Informed CP Detection in Historical Combat Deaths}
We first apply Pi-Change to historical combat-death data, where major wars provide natural prior information on plausible CP locations. This example allows us to assess whether historically important events align with meaningful structural changes.
\begin{figure}[htbp]
	\includegraphics[width=0.8\linewidth]{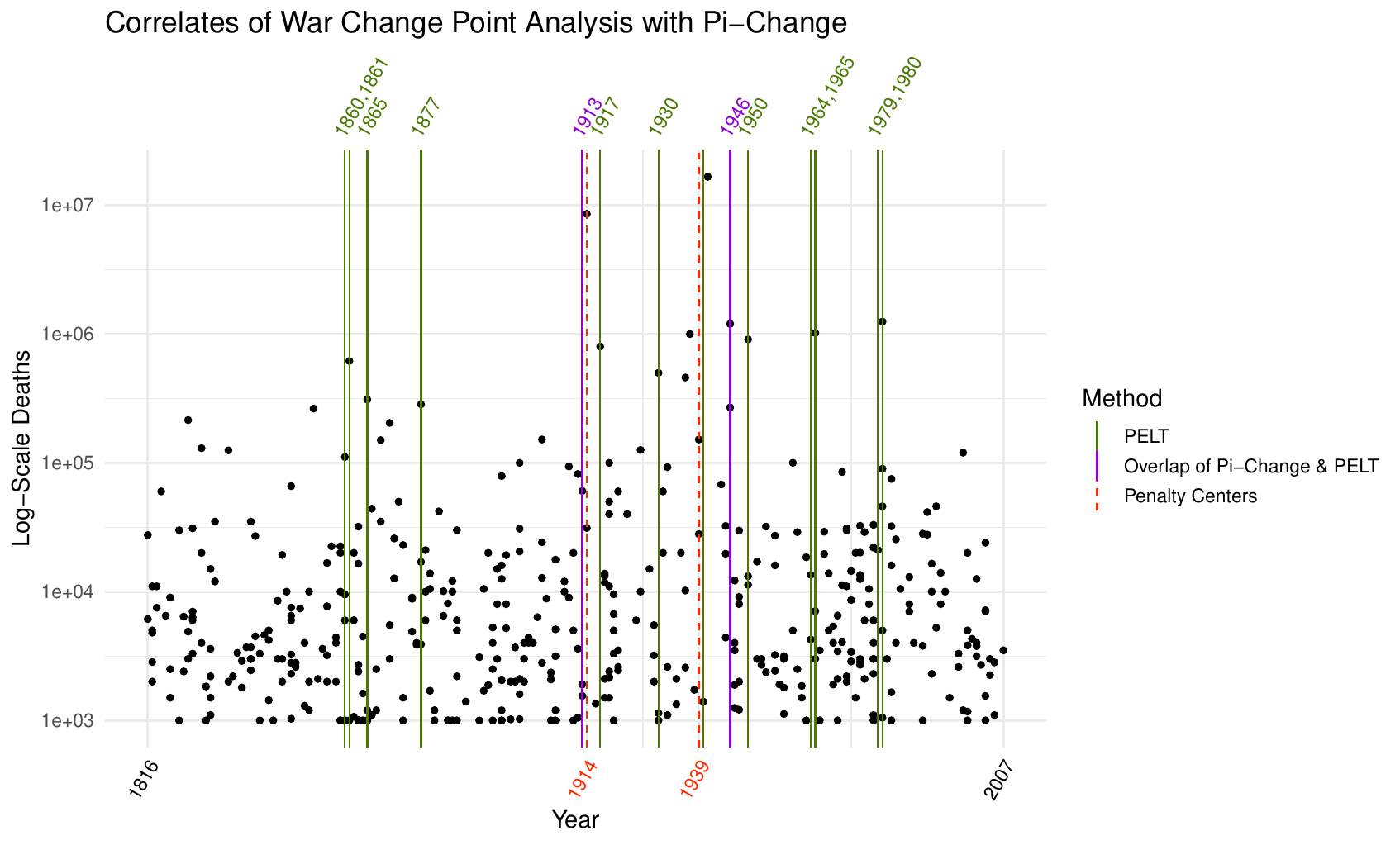}
	\caption{A plot of log-scaled combat deaths by start dates of conflicts. CPs detected by only PELT are indicated by green lines, while those detected by both Pi-Change and PELT are indicated by purple lines. Labels at the top indicate the year of the detected CP, colored by method. Penalty center locations are indicated by red dashed lines, with labels at the bottom of the plot.}
	\label{fig: combat deaths}
    
\end{figure}
The Correlates of War (COW) 3.2 dataset collects the combat deaths across inter-state, intra-state, extra-state, and non-state conflicts from 1816 to 2007. Using this data, Fagan et al \cite{fagan2020change} studied distributional changes in the battle deaths over time. To accommodate the heavy-tailed nature of battle deaths, they extended PELT using multiple power-law models.. 

We revisit this question using Pi-Change and ask whether World War I and World War II mark CPs in combat-death rates.  Unlike Fagan et al, who deliberately avoided any prior information, we treat the two world wars as historically well-motivated candidate CP locations and examine whether additional CPs remain. Thus, we set two priors for CPs: 1914 (the beginning of World War I), and 1939 (the beginning of World War II). The prior spread is set to 10. 

To ensure compatibility, we replicated Fagan et al's data-cleaning process. Each combat event was indexed by its start date to reduce autocorrelation from deaths within the same conflict episode. We excluded events with unknown death counts, either due to incomplete reporting or overlap with concurrent events such as famines, and restricted the analysis to events with at least 1,000 combat deaths. Data were analyzed by Pi-Change under the ZAG model, with a minimum segment length for detection of 10 years. 

Figure  \ref{fig: combat deaths} shows that Pi-Change recovers the two dominant historical breaks while eliminating several additional CPs selected by unweighted PELT. Importantly, the estimated CPs occur near --- but not exactly at --- the prior centers: Pi-Change detected breaks at 1913 and 1946, slightly away from penalty centers at 1914 and 1939. This is desirable, because it shows that the prior-informed penalty acts as soft temporal support rather than as a hard anchor. The resulting segmentation lent credence to the “long peace” or “great violence” hypothesis in political science, which posits a decline in the lethality of large-scale conflicts over the past 80 years. In this application, Pi-Change therefore preserves historically meaningful large-scale regime shifts while allowing the data to determine their precise timing, which is consistent with the goal of using contextual information to guide, rather than dictate, CP estimation.

\subsection{2003 - 2009 Oil Price Bubble}
\begin{figure}[htbp]
	\includegraphics[width=0.85\linewidth]{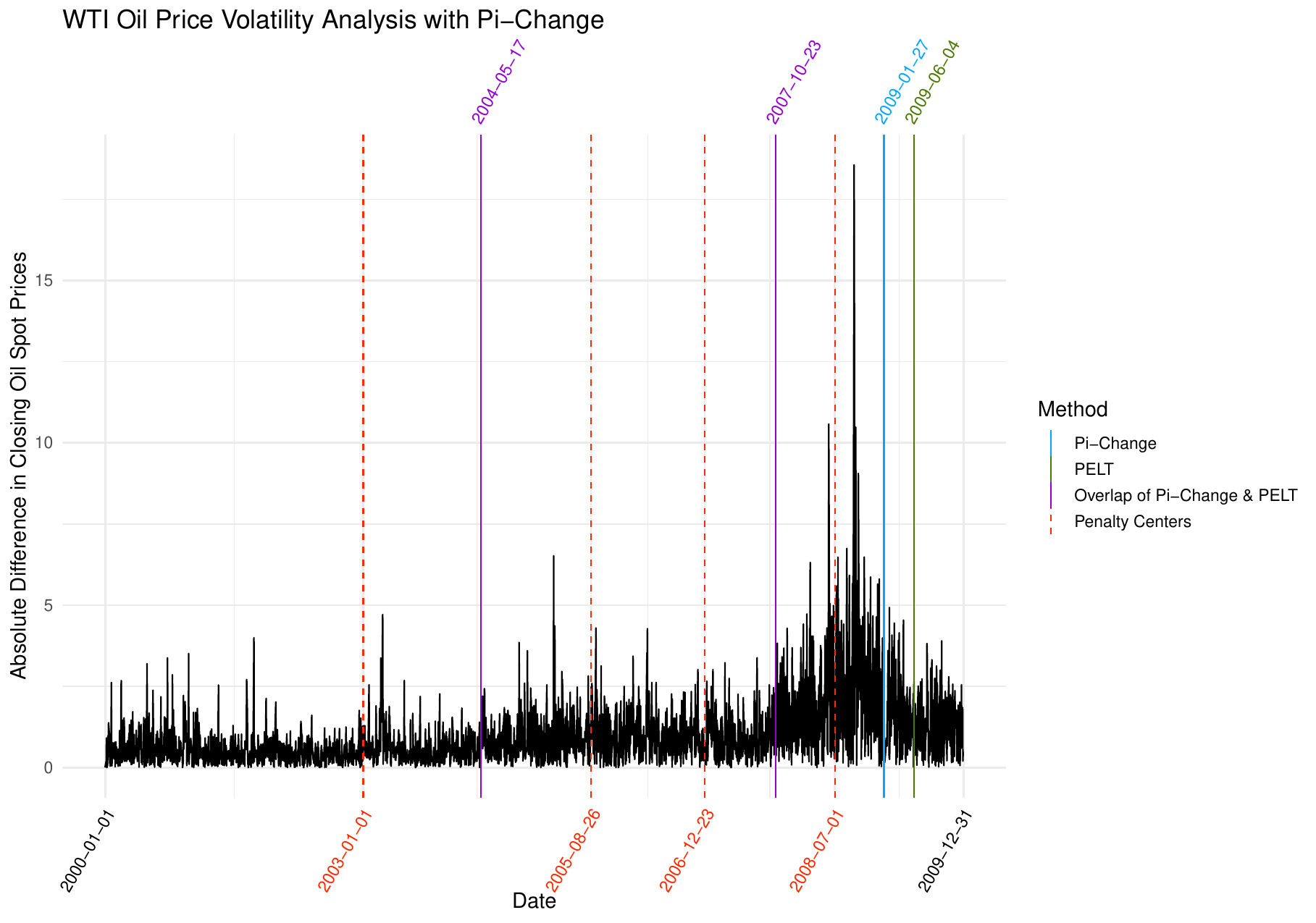}
	\caption{A plot of the absolute difference between closing prices of WTI oil in the 2000's. CPs detected by Pi-Change are indicated by blue lines, while those indicated by PELT are indicated by green lines. The CPs identified by both methods are indicated by purple lines. Labels for the dates of the CPs are given above the graph, colored by method. The locations of the penalty centers are indicated by dashed red lines, with date labels below the plot.}
	\label{fig: oil bubble}
\end{figure}

In the 2000's, inflation-adjusted oil prices rose sharply before collapsing by the end of 2008. Because the drivers of this boom-and-bust period have been and remain debated \citep{nadal2017time, cifarelli2021navigating}, we use Pi-Change to ask whether major geopolitical and macroeconomic events align with short-term CPs in the price volatility in West Texas Intermediate's (WTI) crude-oil price. Specifically, we analyze daily WTI spot prices in Cushing, Oklahoma, from January 1st, 2000 to December 31st, 2009, using the absolute daily change in closing price as the time series of interest. For this application, we considered ZAG-distribution cost function in Pi-Change, with penalty centers corresponding to the following events: January 1st, 2003 (the year the bubble began), August 26th, 2005 (Hurricane Katrina landfall), December 23rd, 2006 (Iran Nuclear Sanctions), and July 1st, 2008 (the approximate beginning of the great recession). The prior spread was set to 130, equivalent to two financial quarters (26 weeks * 5 days / week), and the minimum segment length for detection was set to 65 days, or one financial quarter. 

Figure \ref{fig: oil bubble} suggests that the dominant CPs are driven primarily by the data rather than by specified prior events. Both PELT and Pi-Change identify CPs on May 17, 2004 and October 23, 2007, indicating that these dates correspond to strong shifts in the volatility series. The main difference between Pi-Change and PELT is at the end of the elevated-volatility period, which Pi-Change places at January 27, 2009 and PELT at June 4, 2009. Notably, none of the Pi-Change estimates lies close to a penalty center; the nearest is still about six months away, exceeding the 130-day prior spread. This suggests that the selected geopolitical events exerted limited influence on the shifts in the oil price volatility. This application shows that when hypothesized events do not drive major changes in the data, Pi-Change remains robust to prior mis-specification and still recovers the main shifts independently of those events.

\subsection{2020 - 2022 COVID Mobility Changes in the U.S. and UK}

\begin{figure}[htbp]
	\includegraphics[width= \linewidth]{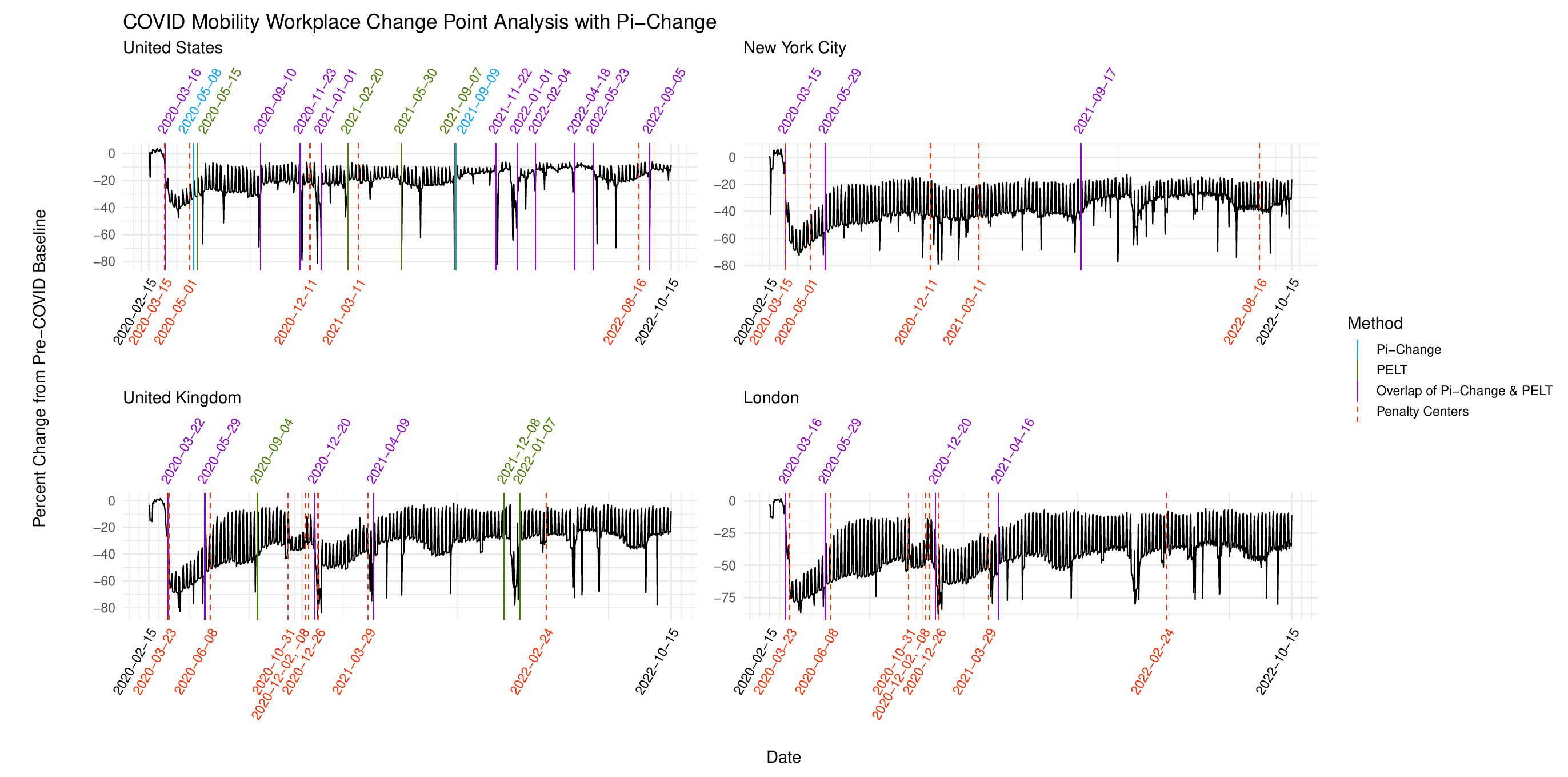}
	\caption{A plot of the percent change in workplace traffic data from 2020 to 2022 in the United States, New York City, the United Kingdom, and London. CPs detected by Pi-Change are indicated by blue lines, while those indicated by PELT are indicated by green lines. The CPs identified by both methods are indicated by purple lines. Labels for the dates of the CPs are given above each graph, colored by method. The locations of the penalty centers are indicated by dashed red lines, with date labels below the plot.}
	\label{fig: covid mobility}
    
\end{figure}
The COVID-19 pandemic disrupted daily life worldwide, where major policy shifts led to abrupt changes in human mobility, providing natural context for plausible CP locations. To track these changes, Google released a large-scale mobility dataset reporting percentage changes in visits and time spent across several location categories, including retail and recreation, grocery and pharmacy, parks, transit stations, workplaces, and residential areas. We showcase how Pi-Change can be used to detect the changes in mobility given federal policy enactment in the United States (U.S.) and United Kingdom (U.K.).

We extracted country-level workplace mobility series for the U.S., the U.K., and city-level workplace mobility series for New York City (NYC), and London, measured as percentage changes relative to a pre-COVID baseline. The city level information was defined as the five counties in New York City proper and the 32 boroughs and city of London that make up Greater London. The country-level data was averaged each day over the 50 U.S. States and the entire U.K.. For this application, segment costs were computed as NLL under a Gaussian model. In Pi-Change, prior information was incorporated through event-centered penalties reflecting major federal policy and economic interventions plausibly associated with changes in mobility. For the U.S., the penalty centers were set at March 15, 2020 (approximate onset of state-level lockdowns), May 1, 2020 (approximate onset of reopening), December 11th, 2020 (Pfizer-BioNTech vaccine emergency use authorization), March 11, 2021 (American Rescue Plan Act), and August 16, 2022 (Inflation Reduction Act). For the U.K., the penalty centers were set at March 23rd, 2020 (beginning of Stay At Home order), June 8th, 2020 (approximate end of lockdown), October 31st, 2020 (onset of second lockdown), December 2nd, 2020 (end of second lockdown), December 8th, 2020 (onset of Pfizer-BioNTech vaccine use), December 26th, 2020 (onset of third lockdown), March 29th, 2021 (end of third lockdown), and February 24th, 2022 (last restrictions due to COVID removed). For Pi-Change, we set both the prior spread and the minimum segment length to 30, reflecting the expectation that meaningful mobility changes would persist for at least several weeks.

Figure \ref{fig: covid mobility} shows that both Pi-Change and PELT detect the major early-pandemic shifts in workplace mobility, including the sharp decline at the onset of the initial lockdown period and the subsequent partial reopening. The main difference between methods arise in the aggregate national series: PELT selects several extra CPs in comparison with Pi-Change, suggesting that Pi-Change prunes shorter-lived fluctuations while preserving the dominant policy-related regime shifts. The figure also reveals an important between-country contrast. While London and New York City follow broadly similar trajectories, with a sharp initial collapse, partial recovery, and persistent depression relative to baseline, the national series diverge more substantially: London closely tracks the U.K., whereas New York City is much simpler than the aggregate U.S. series, consistent with greater spatial heterogeneity in the U.S. response. A renewed late-2020 dip is evident in the U.K. and London series, plausibly reflecting both holiday-season behavior and renewed winter restrictions, while later breaks in the U.S. national series around late 2021 to early 2022 are more suggestive of holiday-related disruptions than of a single nationwide policy shift. Another notable feature is the stronger weekly oscillation in the U.K., London, and New York City series, but much less so in the aggregate U.S. series, likely because national averaging across heterogeneous regions dampens local weekday--weekend commuting patterns. Overall, the figure suggests that Pi-Change is particularly useful for separating broad policy-driven mobility changes from additional breaks induced by short-term seasonal effects or cross-region heterogeneity.

\section{Discussion}
We have developed a method for incorporating prior information on multiple CP locations and, to our knowledge, is the first general algorithm for doing so in MCD problems. Building on the fast and exact PELT algorithm, Pi-Change uses prior information to guide detection toward scientifically meaningful locations. The method constructs penalties from prior distributions so that the resulting penalized likelihood admits an interpretation analogous to maximizing a posterior quantity. This penalization allows Pi-Change to distinguish CPs arising from heterogeneous mechanisms and prune spurious CPs, while retaining the computational efficiency of PELT.

Our simulation studies and three applications highlight the advantages of Pi-Change as a prior-informed MCD method. In actigraphy-based simulations, Pi-Change avoided estimating CP locations that were unsupported by prior information, reduced estimation error, and remained robust to biased prior centers. In the real-data examples, it used historical, geopolitical, and policy context to guide detection toward CPs of interest. In the Correlates of War data, it recovered the historically meaningful CPs identified by \cite{fagan2020change} when World War I and World War II were used as penalty centers. In the oil-price volatility data from the 2003--2009 bubble, Pi-Change selected CPs similar to those from PELT even when the proposed geopolitical events were not closely aligned with the detected changes. In the COVID-19 mobility data, Pi-Change captured greater CP heterogeneity in the United States and relative homogeneity in the United Kingdom.

Currently, work on scalable Bayesian MCD methods remains sparse. Although some Bayesian methods have been proposed, they did not focus on incorporating prior information on CP locations. For example, \cite{barry1993bayesian} introduced product-partition formulations, and \cite{fearnhead2006exact} extended their work and proposed exact Bayesian recursions with priors on CP number and positions. However, these priors typically act through partition structure, segment lengths, or gap distributions, rather than through externally specified, nonstationary prior support on CP locations. Similarly, Bayesian online run-length methods \citep{adams2007bayesian} are designed to infer the posterior distribution of the most recent run length via a hazard or gap model, rather than to encode prior plausibility of particular retrospective CP locations. The contribution of Pi-Change is thus distinct: it uses a prior-informed, location-specific penalty to encode external temporal knowledge directly in the optimisation criterion, while preserving the exact dynamic programming recursion and pruning structure needed for efficient MCD.

Pi-Change has several limitations. First, although Pi-Change incorporates prior information, it trades full Bayesian inference for computational efficiency by relying on maximum a posterior estimation, thus does not directly provide posterior uncertainty quantification on the CP estimates. In settings where uncertainty intervals are required, post-hoc inferential procedures may be applied after segmentation \citep{yau2016inference}. Second, the influence of prior information is controlled jointly by the penalty scale and kernel prior spread, so practical performance depends on sensible tuning. Developing more interpretable tuning schemes and principled sensitivity analyses is therefore an important next step. Third, the current formulation is designed for abrupt changes in piecewise-stationary sequences. When strong trend or seasonal structure is present, these components should be modeled or removed before applying the method. Extending the framework to other types of structural change, including trend and seasonal breaks, is a natural direction for future work.

The applications of Pi-Change are broad. Pi-Change is well suited to settings in which CPs may arise from heterogeneous mechanisms and external information helps identify which changes are scientifically meaningful. Examples include financial market data, where CPs may correspond to crises, policy announcements, as well as short-lived volatility shocks in returns; biomedical, behavioral, climate, energy, and industrial monitoring data, where known events or domain structure can guide detection toward relevant structural changes and away from incidental fluctuations. Another important application of Pi-Change is in estimating the temporal delay between a contextual event and the resulting structural change in the observed data, such as the lag from a policy intervention, environmental exposure, or major social disruption to its measurable effect on the underlying process.

In summary, Pi-Change provides a general framework for incorporating prior information into multiple CP detection while retaining computational efficiency. Its usage lies not simply in detecting change, but in identifying meaningful CPs while pruning spurious ones.

\textbf{Data Availability}
The Correlate of War data can be downloaded from the website of Correlates of War \citep{maoz2019dyadic}. 
The WTI oil price data can be found on the website of \cite{fred_dcoilwtico}. COVID-19 mobility data are available on \cite{kaggle_google_covid_mobility}. The simulation data are fully reproducible from the code provided. All code needed to reproduce the simulations, analyses, and figures in this paper will be made publicly available upon publication. Work is underway to release the Pi-Change algorithm as an R package on CRAN. 

\bibliographystyle{abbrvnat}
\bibliography{pi-change.bib}

\newpage
\appendix
\section{Proofs}
\label{app:proofs}
We define the Pi-Change penalty at location $t$, $g(t)$, as \eqref{eq: gt}. Here, $\beta$ is the baseline PELT penalty, $\lambda \ge 0$ controls the strength of the prior information, and $S(t)\ge 0$ is a time-varying kernel derived from the prior information. When S(t) = 0, the minimum Pi-Change penalty equals the baseline PELT penalty $\beta$, indicating that locations unsupported by prior information receive a larger penalty.

We then define the Pi-Change objective as
\begin{equation}
\label{eq: pi-change-obj}
Q(\tau)
=
\sum_{j=1}^{m}
\left\{
C\bigl(y_{(\tau_{j-1}+1):\tau_j}\bigr)+g(\tau_j)
\right\}
+
C\bigl(y_{(\tau_m+1):N}\bigr)
\end{equation}
where $C(\cdot)$ is the segment cost. Thus, a penalty is charged at each change point
location $\tau_j$, but not at the terminal endpoint $N$.

For $s\le N$, let
\[
\mathcal{T}_s
=
\left\{
\tau: 0=\tau_0<\tau_1<\cdots<\tau_m<\tau_{m+1}=s
\right\},
\]
and let $F(s)$ denote the minimum Pi-Change objective on the data $y_{1:s}$:
\[
F(s)=\min_{\tau\in\mathcal{T}_s} Q_s(\tau),
\]
where $Q_s(\tau)$ is defined analogously to \eqref{eq: pi-change-obj} with endpoint $s$.

\begin{proposition}
\label{prop: recursion}
The Pi-Change objective satisfies the recursion
\begin{equation}
\label{eq: recursion}
F(s)
=
\min_{1\le t<s}
\left\{
F(t)+C\bigl(y_{(t+1):s}\bigr)+g(t)
\right\}.
\end{equation}
Consequently, dynamic programming yields the exact global minimizer of the Pi-Change
criterion.
\end{proposition}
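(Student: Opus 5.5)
The plan is the standard optimal-substructure argument used for optimal partitioning and PELT. I will condition on the location of the last change point before $s$ and check that the location-specific penalty $g$ splits off additively.

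First comes a boundary convention. As written, \eqref{eq: recursion} ranges over $1\le t<s$, so it omits the configuration with no change point, whose value is $Q_s=C(y_{1:s})$. I would handle this by setting $F(0)=-g(0)$, the analogue of PELT's $F(0)=-\beta$, and letting $t$ range over $0\le t<s$. Then the $t=0$ term equals $C(y_{1:s})$. Equivalently, the no-change-point case can be added to the minimum separately. Either way the argument below is unchanged, and I would state this convention explicitly.

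\textbf{Step 1 (decomposition).} Fix $\tau\in\mathcal{T}_s$ with $m\ge 1$ change points, and let $t=\tau_m$ be the last one. By the definition of $Q_s$,
\[
Q_s(\tau)=\Bigl[\sum_{j=1}^{m-1}\bigl\{C(y_{(\tau_{j-1}+1):\tau_j})+g(\tau_j)\bigr\}+C(y_{(\tau_{m-1}+1):t})\Bigr]+g(t)+C(y_{(t+1):s}).
\]
The bracketed term is exactly $Q_t(\tau')$, where $\tau'=(\tau_0,\dots,\tau_{m-1},t)\in\mathcal{T}_t$. The key observation is that $g(\tau_j)$ depends only on the location $\tau_j$, not on $s$, on the neighbouring change points, or on segment parameters. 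So the penalty charged at $t$ can be moved out as a fixed offset, and the remaining penalties are precisely the ones $Q_t$ charges, since $Q_t$ charges no penalty at its endpoint $t$.

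\textbf{Step 2 (two inequalities).} For $F(s)\ge$ RHS: take the minimizer $\tau$. Step 1 gives $F(s)=Q_t(\tau')+C(y_{(t+1):s})+g(t)\ge F(t)+C(y_{(t+1):s})+g(t)$. For $F(s)\le$ RHS: for any $t$, take a minimizer $\tau^*$ of $Q_t$ and append the endpoint $s$. This gives an element of $\mathcal{T}_s$ whose value, by Step 1 read in reverse, is $F(t)+C(y_{(t+1):s})+g(t)$. Combining the two inequalities yields \eqref{eq: recursion}.

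\textbf{Step 3 (exactness).} Induct on $s$. Each $F(s)$ is computed from the already exact values $F(0),\dots,F(s-1)$, so $F(N)$ is the global minimum of $Q$. Recording the argmin $t$ at each $s$ and backtracking from $N$ recovers a global minimizer $\tau$.

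The only delicate point is bookkeeping rather than mathematics. One must check that $Q_t$ and $Q_s$ use the same convention (a penalty at every change point, none at the endpoint), so that exactly one $g(t)$ appears in the recursion, and handle the $t=0$ boundary case. Since $g(t)=\lambda S(t)+\beta$ is a deterministic function of $t$ alone, no interaction terms arise. This is exactly what would fail for a penalty depending on several change points jointly.
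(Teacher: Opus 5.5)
Your proof is correct and follows the same optimal-substructure argument as the paper: condition on the last change point $t$, split $Q_s(\tau)=Q_t(\tau')+C(y_{(t+1):s})+g(t)$, and use optimality of the prefix. Your explicit two-inequality argument and the $F(0)=-g(0)$ convention are worthwhile refinements. The paper's range $1\le t<s$, read literally, omits the no-change-point segmentation and is empty at $s=1$.
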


\begin{proof}
Consider an optimal segmentation of $y_{1:s}$, and let $t$ be its last CP prior to $s$.
Its total cost can then be decomposed into three parts: the optimal cost up to time $t$, the
cost of the final segment $y_{(t+1):s}$, and the penalty charged for placing a CP
at $t$. Hence
\[
Q_s(\tau)=Q_t(\tau_{1:t})+C\bigl(y_{(t+1):s}\bigr)+g(t),
\]
where $\tau_{1:t}$ denotes the induced segmentation on $y_{1:t}$. If the segmentation on
$y_{1:t}$ were not optimal, it could be replaced by a lower-cost segmentation, contradicting
optimality of $\tau$. Therefore the prefix cost must equal $F(t)$, and minimizing over all
possible $t<s$ yields \eqref{eq: recursion}.
\end{proof}

Proposition \ref{prop: recursion} shows that Pi-Change remains an exact optimal
partitioning algorithm. We now show that the PELT pruning argument continues to hold.

\begin{theorem}
\label{thm: pruning}
Suppose there exists a constant $K$ such that for all $t<s<T$,
\begin{equation}
\label{eq: cost-condition}
C\bigl(y_{(t+1):s}\bigr)+C\bigl(y_{(s+1):T}\bigr)+K
\le
C\bigl(y_{(t+1):T}\bigr).
\end{equation}
If, for some $t<s$,
\begin{equation}
\label{eq: prune-condition}
F(t)+C\bigl(y_{(t+1):s}\bigr)+g(t)+K
\ge
F(s)+g(s),
\end{equation}
then $t$ cannot be the optimal last change point prior to any future time $T>s$.
\end{theorem}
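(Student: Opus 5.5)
We will follow the original PELT pruning argument of Killick et al., using Proposition~\ref{prop: recursion} as the only structural input. Fix any future time $T>s$. By Proposition~\ref{prop: recursion}, $t$ is an optimal last change point before $T$ exactly when $F(t)+C(y_{(t+1):T})+g(t)$ attains the minimum in \eqref{eq: recursion} at $T$. The plan is to show that the candidate $s$ always does at least as well at $T$, and to handle ties when interpreting ``cannot be optimal.''

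The chain of inequalities runs as follows. First apply the cost condition \eqref{eq: cost-condition} with the triple $t<s<T$:
\[
F(t)+C\bigl(y_{(t+1):T}\bigr)+g(t)\;\ge\; F(t)+C\bigl(y_{(t+1):s}\bigr)+C\bigl(y_{(s+1):T}\bigr)+K+g(t).
\]
Next substitute the pruning hypothesis \eqref{eq: prune-condition}, which bounds $F(t)+C(y_{(t+1):s})+g(t)+K$ below by $F(s)+g(s)$. This gives
\[
F(t)+C\bigl(y_{(t+1):T}\bigr)+g(t)\;\ge\; F(s)+C\bigl(y_{(s+1):T}\bigr)+g(s).
\]
The right-hand side is the objective value at $T$ obtained by taking $s$ as the last change point, using an optimal segmentation of $y_{1:s}$ of cost $F(s)$. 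Hence $s$ is always at least as good as $t$ at $T$, for every $T>s$.

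The key point to stress is that $g$ depends only on the location of the change point. The term $g(t)$ is therefore identical on both sides of the first inequality and simply carries through. Likewise, $g(s)$ is exactly the penalty incurred when $s$ is made a change point in the recursion at time $T$. In short, the time-varying penalty behaves like the constant $\beta$ in PELT, except that the two sides carry different penalty values, $g(t)$ and $g(s)$.

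The main obstacle is that the conclusion only yields ``$\ge$,'' not ``$>$.'' So $t$ could still tie with $s$, and ``cannot be optimal'' must be read as in PELT: discarding $t$ never changes the optimal value $F(T)$, because $s$ attains a value at least as small. If a strict statement is wanted, one can either assume strict inequality in \eqref{eq: prune-condition} or adopt a tie-breaking convention. In either case the pruned candidate set remains sufficient for computing $F(T)$ exactly for all $T>s$, which is what the algorithm needs.

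A second point to check is the indexing convention: whether $s$ itself is allowed as a change point, and whether $F(s)$ excludes the penalty at $s$. Since $Q_s$ charges no penalty at the endpoint $s$, adding $g(s)$ explicitly is correct. This is why $g(s)$ appears on the right of \eqref{eq: prune-condition}.
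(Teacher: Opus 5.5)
Your proposal is correct and is essentially the paper's proof. You combine the cost condition for $t<s<T$ with the pruning hypothesis to reach $F(t)+C(y_{(t+1):T})+g(t)\ge F(s)+C(y_{(s+1):T})+g(s)$, carrying the location-only penalties along; the paper applies the two inequalities in the opposite order. Your point that this gives only a weak inequality, so that ``cannot be optimal'' should mean that discarding $t$ never changes $F(T)$ unless ties are broken or the hypothesis is strict, is a careful refinement the paper's proof glosses over.
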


\begin{proof}
Fix $T>s$. Starting from \eqref{eq: prune-condition}, add
$C\bigl(y_{(s+1):T}\bigr)$ to both sides:
\[
F(t)+C\bigl(y_{(t+1):s}\bigr)+g(t)+K+C\bigl(y_{(s+1):T}\bigr)
\ge
F(s)+g(s)+C\bigl(y_{(s+1):T}\bigr).
\]
By \eqref{eq: cost-condition},
\[
C\bigl(y_{(t+1):s}\bigr)+C\bigl(y_{(s+1):T}\bigr)+K
\le
C\bigl(y_{(t+1):T}\bigr),
\]
and therefore
\[
F(t)+C\bigl(y_{(t+1):T}\bigr)+g(t)
\ge
F(s)+C\bigl(y_{(s+1):T}\bigr)+g(s).
\]
The left-hand side is the objective value obtained by taking $t$ as the last CP before $T$, while the right-hand side is the corresponding value obtained by taking $s$ as the last CP before $T$. Hence $t$ can never be preferred to $s$ for any future endpoint $T>s$, and so $t$ cannot be the optimal last CP prior to $T$.
\end{proof}
\end{document}